\newtheorem{theorem}{Theorem}
\newtheorem{lemma}{Lemma}
\newtheorem{definition}{Definition}
\newtheorem{corollary}{Corollary}
\def\beq{\begin{equation}}
\def\eeq{\end{equation}}
\def\bea{\begin{eqnarray}}
\def\eea{\end{eqnarray}}
\let\expandafter
\def\subeqnarray{\arraycolsep1pt
   \def\@eqnnum\stepcounter##1{\stepcounter{subequation}
       {\reset@font\rm(\theequation\alph{subequation})}}
\jot5mm     \eqnarray}
\newcommand{\bbR}{{\mathbb R}}
\newcommand{\bbC}{{\mathbb C}}
\newcommand{\cE}{{\mathcal E}}
\def\ep{\varepsilon}
\def\epsilon{\varepsilon}
\def\t{\widetilde}
\def\nn{\nonumber}
\def\endpf{\hfill$\square$\medskip}
\newbox\meibox
\def\placeunder#1#2#3#4{\setbox\meibox%
\vbox{\hbox{\hskip#4$\hphantom{#2}$}\hbox{$\hphantom{#1}$}}%
\vtop{\baselineskip=0pt\lineskiplimit=\baselineskip%
\lineskip=#3\hbox to \wd\meibox{\hfil\hskip#4$#2$\hfil}%
\hbox to \wd\meibox{\hfil$#1$\hfil}}}
\def\intprod{\mathbin{\hbox to 6pt{%
                 \vrule height0.4pt width5pt depth0pt
                 \kern-.4pt
                 \vrule height6pt width0.4pt depth0pt\hss}}}
\begin{document}
\title[Geometry of Kahan discretizations of Hamiltonian systems. II]
{Geometry of the Kahan discretizations \\ of planar quadratic Hamiltonian systems. II.\\
Systems with a linear Poisson tensor}

%
\author{Matteo Petrera \and Yuri B. Suris }

\thanks{E-mail: {\tt  petrera@math.tu-berlin.de, suris@math.tu-berlin.de}}

\maketitle

\begin{center}
{\footnotesize{
Institut f\"ur Mathematik, MA 7-1\\
Technische Universit\"at Berlin, Str. des 17. Juni 136,
10623 Berlin, Germany
}}
\end{center}

\begin{abstract}
Kahan discretization is applicable to any quadratic vector field and produces a birational map which approximates the shift along the phase flow. For a planar quadratic Hamiltonian vector field with a linear Poisson tensor and with a quadratic Hamilton function, this map is known to be integrable and to preserve a pencil of conics. In the paper ``Three classes of quadratic vector fields for which the Kahan discretization is the root of a generalised Manin transformation'' by P. van der Kamp et al. \cite{KCMMOQ}, it was shown that the Kahan discretization can be represented as a composition of two involutions on the pencil of conics. In the present note, which can be considered as a  comment to that paper, we show that this result can be reversed. 
For a linear form $\ell(x,y)$, let $B_1,B_2$ be any two distinct points on the line $\ell(x,y)=-c$, and let $B_3,B_4$ be any two distinct points on the line $\ell(x,y)=c$. Set $B_0=\tfrac{1}{2}(B_1+B_3)$ and $B_5=\tfrac{1}{2}(B_2+B_4)$; these points lie on the line $\ell(x,y)=0$. Finally, let $B_\infty$ be the point at infinity on this line. Let $\mathfrak E$ be the pencil of conics with the base points $B_1,B_2,B_3,B_4$. Then the composition of the $B_\infty$-switch and of the $B_0$-switch on the pencil $\mathfrak E$ is the Kahan discretization of a Hamiltonian vector field $f=\ell(x,y)\begin{pmatrix}\partial H/\partial y \\ -\partial H/\partial x \end{pmatrix}$ with a quadratic Hamilton function $H(x,y)$. This birational map $\Phi_f:\mathbb C P^2\dashrightarrow\mathbb C P^2$ has three singular points $B_0,B_2,B_4$, while the inverse map $\Phi_f^{-1}$ has three singular points $B_1,B_3,B_5$.
\end{abstract}

\section{Introduction}

The Kahan discretization was introduced in \cite{K} as a method applicable to any system of ordinary differential equations on $\bbR^n$ with a quadratic vector field:
\begin{equation}\label{eq: diff eq gen}
\dot{x}=f(x)=Q(x)+Bx+c,
\end{equation}
where each component of $Q:\bbR^n\to\bbR^n$ is a quadratic form, while $B\in{\rm Mat}_{n\times n}(\bbR)$ and $c\in\bbR^n$. Kahan's discretization reads as 
\begin{equation}\label{eq: Kahan gen}
\frac{\widetilde{x}-x}{2\epsilon}=Q(x,\widetilde{x})+\frac{1}{2}B(x+\widetilde{x})+c,
\end{equation}
where
\[
Q(x,\widetilde{x})=\frac{1}{2}\big(Q(x+\widetilde{x})-Q(x)-Q(\widetilde{x})\big)
\]
is the symmetric bilinear form corresponding to the quadratic form $Q$. Equation (\ref{eq: Kahan gen}) is {\em linear} with respect to $\widetilde x$ and therefore defines a {\em rational} map $\widetilde{x}=\Phi_f(x,\epsilon)$. Explicitly, one has
\beq \label{eq: Phi gen}
\t x =\Phi_f(x,\ep)= x + 2\ep \left( I - \ep f'(x) \right)^{-1} f(x),
\eeq
where $f'(x)$ denotes the Jacobi matrix of $f(x)$. 

Clearly, this map approximates the time $\epsilon$ shift along the solutions of the original differential system. Since equation (\ref{eq: Kahan gen}) remains invariant under the interchange $x\leftrightarrow\widetilde{x}$ with the simultaneous sign inversion $\epsilon\mapsto-\epsilon$, one has the {\em reversibility} property
\begin{equation}\label{eq: reversible}
\Phi^{-1}_f(x,\epsilon)=\Phi_f(x,-\epsilon).
\end{equation}
In particular, the map $f$ is {\em birational}. 

We will always set $\epsilon=1$ (and drop $\epsilon$ from notations). One can restore $\epsilon$ by the simple rescaling $f(x)\mapsto \epsilon f(x)$. For $\epsilon=1$, formula 
\eqref{eq: reversible} should be replaced by $\Phi^{-1}_f(x)=\Phi_{-f}(x)$.

In \cite{PS, PPS1, PPS2} the authors undertook an extensive study of the properties of the Kahan's method when applied to integrable systems. It was demonstrated that, in an amazing number of cases, the method preserves integrability in the sense that the map $\Phi_f(x,\epsilon)$ possesses as many independent integrals of motion as the original system $\dot x=f(x)$.
Further remarkable geometric properties of the Kahan's method were discovered in \cite{CMOQ1, CMOQ2, CMOQ4}.
\begin{theorem} {\bf\cite{CMOQ4}}
Consider a Hamiltonian vector field 
\beq\label{eq: CMOQ vector field}
f(x,y)=\ell(x,y)\begin{pmatrix} \partial H/\partial y \\ -\partial H/\partial x \end{pmatrix},
\eeq
where $\ell: \mathbb R^2\to \mathbb R$ is a linear form, and the Hamilton function $H:\bbR^n \to\bbR$ is a polynomial of degree 2,
$$
H(x,y)=\frac{1}{2}a_1x^2+a_2xy+\frac{1}{2}a_3y^2+a_4x+a_5y. 
$$
Then the map $\Phi_f(x)$ possesses the following rational integral of motion: 
\beq\label{eq: CMOQ int}
\t H(x,y)= \frac{C(x,y)}{D(x,y)},
\eeq
where $C(x,y)$, $D(x,y)$ are polynomials of degree 2 given by 
\beq\label{eq: CD}
C(x,y)=H(x,y)-\tfrac{1}{2} \gamma_2 \ell^2(x,y), \quad D(x,y)=1- \gamma_1 \ell^2(x,y),
\eeq
and
$$
\gamma_1=a_2^2-a_1a_3, \quad \gamma_2=a_3a_4^2+a_1a_5^2-2a_2a_4a_5.
$$
\end{theorem}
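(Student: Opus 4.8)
The plan is to verify directly that $\Phi_f=\iota_{B_0}\circ\iota_{B_\infty}$ after a convenient normalization, and to \emph{reconstruct} the Hamiltonian $H$ from the integral of motion quoted above. Here $\iota_{B_\infty},\iota_{B_0}$ denote the two switches on $\mathfrak E$: for a generic point $P$, with $\mathcal C_P\in\mathfrak E$ the unique conic of the pencil through $P$, the image $\iota_B(P)$ is the second intersection of the line $PB$ with $\mathcal C_P$. Each $\iota_B$ is an involution preserving $\mathfrak E$, hence so does their composition; the content of the statement is that this composition equals a Kahan map. I would first exploit covariance to fix coordinates. Kahan's discretization is equivariant under affine changes of variables, so $\Phi_{A_*f}=A\circ\Phi_f\circ A^{-1}$ for every affine $A$, and the switch construction is defined purely by incidence and by the affine notions of midpoint and point at infinity, hence transforms the same way. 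Since the form $f=\ell\,(\partial H/\partial y,\,-\partial H/\partial x)^{\mathsf T}$ is preserved by affine maps up to adjusting $\ell$ and $H$, it suffices to treat the normalized case $\ell(x,y)=y$ with $B_1=(p_1,-c)$, $B_2=(p_2,-c)$, $B_3=(p_3,c)$, $B_4=(p_4,c)$, so that $B_0=(\tfrac12(p_1+p_3),0)$, $B_5=(\tfrac12(p_2+p_4),0)$ and $B_\infty=[1:0:0]$.

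Next I would compute the two switches explicitly. Taking the reducible members $E_1=y^2-c^2$ (the lines $y=\pm c$) and $E_2=L_{13}\cdot L_{24}$ (the lines $B_1B_3$ and $B_2B_4$, which pass through $B_0$ and $B_5$ respectively) as generators of $\mathfrak E$, a point $P$ selects the pencil parameter $(E_2(P):-E_1(P))$ and hence $\mathcal C_P$. Because $B_\infty=[1:0:0]$, the switch $\iota_{B_\infty}$ pairs the two points of $\mathcal C_P$ on each horizontal line, reflecting $x$ about $-\tfrac12(\mathrm{coeff}\,x)/(\mathrm{coeff}\,x^2)$ of $\mathcal C_P$ at the given height $y$; the switch $\iota_{B_0}$ pairs the points of $\mathcal C_P$ collinear with $B_0$. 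Composing them yields an explicit birational map $\Psi=\iota_{B_0}\circ\iota_{B_\infty}$, which I expect to be a quadratic Cremona transformation, consistent with the three-singular-point count.

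To identify $\Psi$ as a Kahan map I would run the quoted Theorem in reverse. Its level conics $C-hD=0$ sweep a pencil, and requiring this pencil to be $\mathfrak E$ forces, on the one hand, $D=1-\gamma_1\ell^2\propto E_1$, i.e.\ $\gamma_1=a_2^2-a_1a_3=1/c^2$, and on the other hand $C=H-\tfrac12\gamma_2\ell^2$ to vanish at $B_1,\dots,B_4$, i.e.\ $C\in\mathfrak E$. Together with the definitions of $\gamma_1,\gamma_2$ this is a solvable system for $a_1,\dots,a_5$ in terms of $c$ and the $p_i$, producing an explicit candidate $H$; writing $\Phi_f$ from \eqref{eq: Phi gen} for the resulting $f$, it remains to check $\Phi_f=\Psi$. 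This comparison is the main obstacle, and I would tame it fiberwise: both maps preserve $\mathfrak E$ and restrict to each smooth conic $\mathcal C\cong\mathbb P^1$ as a M\"obius transformation, so it is enough to match them on a generic fiber. On $\mathcal C$ each switch $\iota_B$ is the involution pairing points collinear with $B$, whose two fixed points are $\mathcal C\cap\operatorname{polar}_{\mathcal C}(B)$; thus $\Psi|_{\mathcal C}$ is a composition of two involutions with explicitly known fixed-point pairs. Since \cite{KCMMOQ} exhibits $\Phi_f$ likewise as a composition of two switches on the same pencil, the verification reduces to checking that the two distinguished points of that factorization are exactly $B_\infty$ and $B_0$, in the correct order — i.e.\ to matching two pairs of fixed points per fiber, which also pins down the composition order.

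Finally, for the singular points I would homogenize $\Phi_f$ and compute its indeterminacy locus as the common zeros of the defining forms, using the reversibility $\Phi_f^{-1}=\Phi_{-f}$ to obtain the locus of the inverse. Tracking the base points $B_1,\dots,B_4$ together with $B_0,B_5,B_\infty$ through the two switches, I expect the composite to blow down precisely at $B_0,B_2,B_4$ and its inverse at $B_1,B_3,B_5$, which both confirms the stated singular points and fixes the order of the two switches asserted in the theorem.
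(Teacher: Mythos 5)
Your proposal does not prove the stated theorem: at the decisive step it assumes it. The assertion that $\widetilde H=C/D$ is an integral of $\Phi_f$ is equivalent to the assertion that $\Phi_f$ maps each conic $\cE_\lambda=\{C-\lambda D=0\}$ of the pencil to itself. Your outline (construct a candidate $H$ from the pencil, then check that $\Phi_f$ coincides with the composition of switches $\Psi$) is viable, but your method for the check --- which you yourself flag as ``the main obstacle'' --- is circular: the fiberwise M\"obius-transformation argument opens with ``both maps preserve $\mathfrak E$'', which for $\Phi_f$ is precisely what has to be proven. The attempt to close this hole by citing the factorization of $\Phi_f$ from \cite{KCMMOQ} fails twice over: first, that factorization immediately implies the whole theorem (a composition of two switches on the pencil tautologically preserves every conic of the pencil), so if you may cite it there is nothing left for your construction to do; second, as this paper explicitly notes, \cite{KCMMOQ} established the factorization only for homogeneous $H$, i.e.\ $a_4=a_5=0$, whereas the Hamiltonian you reconstruct from a generic pencil is inhomogeneous --- extending the factorization to that case is exactly the content of Theorem \ref{Th characterization}, not something available to quote. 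The repair is to carry out the comparison as an explicit computation: either verify the polynomial identity $C(\widetilde x,\widetilde y)\,D(x,y)=C(x,y)\,D(\widetilde x,\widetilde y)$ using the rational form $\widetilde x=R/T$, $\widetilde y=S/T$ of Section \ref{sect: kahan geometry} (the present paper states the theorem without proof, as a quotation from \cite{CMOQ4}, and a direct verification of this identity is the natural route), or, as in the paper's proof of Theorem \ref{Th characterization}, substitute your explicitly computed $\Psi$ into the bilinear Kahan equations of motion \eqref{dk1}--\eqref{dk2} and solve the resulting linear system for $a_1,\dots,a_5$.

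There is a second, independent gap of quantification. The theorem claims the integral for \emph{every} quadratic $H$ and every linear form $\ell$, but your construction runs in the opposite direction: from a configuration $(c,p_1,\dots,p_4)$ it produces one special $H$. Even with the comparison $\Phi_f=\Psi$ established, you would have proven invariance of $C/D$ only for Hamiltonians in the image of that construction. You must invert it: show that for generic $H$ (say $\gamma_1\neq 0$ and the four base points of the pencil \eqref{eq: pencil} distinct) the level pencil of $C/D$ has exactly the assumed shape --- two base points on each of the lines $\ell=\pm c$ with $c=\gamma_1^{-1/2}$, which is what the computation in the proof of Theorem \ref{Th geometry} confirms --- so that $H$ is recovered from its own pencil; the degenerate cases ($\gamma_1=0$, confluent base points) can then be absorbed by noting that the identity $C(\widetilde x,\widetilde y)\,D(x,y)=C(x,y)\,D(\widetilde x,\widetilde y)$ is a Zariski-closed condition on the coefficients $a_1,\dots,a_5$ and holds on a dense set. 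With these two repairs your route would become a genuine, geometrically illuminating alternative to the direct verification that the paper simply cites; as written, it establishes neither the key identity nor the full generality of the statement.
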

 
The level sets of the integral \eqref{eq: CMOQ int} are conics
\beq\label{eq: pencil}
\cE_\lambda=\big\{(x,y): C(x,y)-\lambda D(x,y)=0\big\},
\eeq
which form a linear system (a pencil). 

We assume that the curve $C(x,y)  =0$ is nonsingular. The second basis curve of the pencil, $D(x,y)  =0$, is reducible, and consists of two lines $\ell(x,y)=\pm c$, where  $c=\gamma_1^{-1/2}$. All curves $\cE_\lambda$ pass through the set of {\em base points} which is defined by $C(x,y)=D(x,y)=0$. Generically, there are four base points, counted with multiplicities. For any point $(x_0,y_0)\in\bbC^2$ different from the base points, there is a unique  curve $\cE_\lambda$ of the pencil such that $(x_0,y_0)\in\cE_\lambda$. It is defined by $\lambda=\t H(x_0,y_0)=C(x_0,y_0)/D(x_0,y_0)$. The orbit of $(x_0,y_0)$ lies on $\cE_\lambda$.

We recall the definition of involutions on conics and on pencils of conics.  
\begin{definition}
{\bf \cite{KMQ, KCMMOQ}}.

{\em 1)}
Consider a nonsingular conic $\cE$ in $\bbC^2$, and a point $B\not\in\cE$. The $B$-switch on $\mathcal{E}$ is the map $I_{\cE, B}: {\mathcal{E}}\rightarrow {\mathcal{E}}$ defined as follows: for any $P\in\cE$, the point $I_{\cE, B}(P)$ is the unique second intersection point of $\cE$ with the line $(BP)$.

{\em 2)}
Consider a pencil $\frak E=\{\cE_\lambda\}$ of conics in $\bbC^2$. The $B$-switch on $\frak E$ is the birational map $I_{\frak E,B}:\bbC^2\dashrightarrow\bbC^2$ defined as follows. For any $P\in\bbC^2$ which is not a base point of $\frak E$, let $\cE_\lambda$ be the unique curve of the pencil such that $P\in\cE_\lambda$, and set $I_{\frak E,B}(P)=I_{\cE_\lambda,B}(P)$. 
\end{definition}

The following statement is established in \cite{KCMMOQ} in the case when $H(x,y)$ is a quadratic form (a homogeneous polynomial of degree 2), that is, when $a_4=a_5=0$.

\begin{theorem}\label{Th composition} 
The Kahan map $\Phi_f$ can be represented as a composition of two involutions on the pencil $\mathfrak E$:
\beq \label{eq main}
\Phi_f = I_{\mathfrak E,B_\infty} \circ I_{\mathfrak E,B_0},
\eeq
where $B_\infty=[-\beta:\alpha:0]$ and 
$$
B_0=\rho(-\beta,\alpha), \quad \rho=\frac{1+(\beta a_4-\alpha a_5)}{\beta^2 a_1-2\alpha\beta a_2+\alpha^2 a_3},
$$ 
are two points on the line $\ell(x,y)=0$.
\end{theorem}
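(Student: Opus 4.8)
The plan is to verify the identity fibrewise over the pencil $\mathfrak E$. Write $\sigma=I_{\mathfrak E,B_\infty}\circ I_{\mathfrak E,B_0}$. Each pencil switch preserves every conic $\cE_\lambda$ by definition, so $\sigma$ does too, while $\Phi_f$ preserves each $\cE_\lambda$ because $\t H$ from \eqref{eq: CMOQ int} is an integral of motion. Hence both maps restrict to automorphisms of the generic conic $\cE_\lambda\cong\mathbb C P^1$, i.e.\ to M\"obius transformations, and it suffices to show that these two transformations agree at three points of $\cE_\lambda$.

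First I would exhibit two common fixed points on every fibre. The field $f=\ell\cdot J\nabla H$, with $J=\left(\begin{smallmatrix}0&1\\-1&0\end{smallmatrix}\right)$, vanishes identically on the line $\ell=0$, so by \eqref{eq: Phi gen} one has $\t x=x$ there: $\Phi_f$ fixes $\{\ell=0\}$ pointwise. For $\sigma$, note that $B_0=\rho(-\beta,\alpha)$ and $B_\infty=[-\beta:\alpha:0]$ both lie on $\ell=0$; hence for $P\in\{\ell=0\}$ the lines $(B_0P)$ and $(B_\infty P)$ both coincide with $\ell=0$, so $I_{\mathfrak E,B_0}(P)=I_{\mathfrak E,B_\infty}(P)=Q$, the second intersection of $\cE_\lambda$ with $\ell=0$, and a second application returns $Q\mapsto P$; thus $\sigma$ also fixes $\{\ell=0\}$ pointwise. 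On each $\cE_\lambda$ the two maps therefore share the fixed points $R_1,R_2=\cE_\lambda\cap\{\ell=0\}$.

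For the third point I would use the base locus. Label the base points by $B_1,B_2\in\{\ell=-c\}$ and $B_3,B_4\in\{\ell=c\}$ so that $B_0=\tfrac12(B_1+B_3)$; checking that the explicit $B_0=\rho(-\beta,\alpha)$ is indeed this midpoint is a short computation with the base points obtained from $C=D=0$. On the component $\{\ell=-c\}$ of $D=0$ one has $D\equiv 0$, so $\cE_\lambda\cap\{\ell=-c\}=\{C=0\}\cap\{\ell=-c\}=\{B_1,B_2\}$ for every $\lambda$; since this line runs in the direction $(-\beta,\alpha)$ of $B_\infty$, the switch $I_{\mathfrak E,B_\infty}$ swaps $B_1\leftrightarrow B_2$ and $B_3\leftrightarrow B_4$. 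As $B_0,B_1,B_3$ are collinear and $\cE_\lambda\cap(B_1B_3)=\{B_1,B_3\}$, the switch $I_{\mathfrak E,B_0}$ swaps $B_1\leftrightarrow B_3$. Composing gives $\sigma(B_1)=I_{\mathfrak E,B_\infty}(B_3)=B_4$. It then remains to show $\Phi_f(B_1)=B_4$; once this holds, $\Phi_f$ and $\sigma$ are M\"obius transformations of $\cE_\lambda$ agreeing at $R_1,R_2,B_1$, hence equal, and therefore $\Phi_f=\sigma$ as birational maps of $\mathbb C P^2$.

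The main obstacle is precisely this evaluation $\Phi_f(B_1)=B_4$, since $B_1$ is a genuine (quadratic-irrational) base point at which \eqref{eq: Phi gen} must be evaluated. I expect the cleanest route is to bypass the pointwise evaluation by comparing multipliers: a M\"obius transformation is determined by its two fixed points together with its multiplier at one of them. At $P\in\{\ell=0\}$ the linearization $D\Phi_f(P)=I+2\big(I-f'(P)\big)^{-1}f'(P)$ is governed by the rank-one matrix $f'(P)=\big(J\nabla H(P)\big)(\nabla\ell)^{\top}$, whose nonzero eigenvalue $\mu=\alpha\,\partial_y H-\beta\,\partial_x H$ has eigenvector $J\nabla H(P)$ tangent to $\cE_\lambda$ (indeed $\nabla\t H(P)=\nabla H(P)$ on $\ell=0$); this yields the multiplier $(1+\mu)/(1-\mu)$ of $\Phi_f|_{\cE_\lambda}$ at $P$. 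One then computes the corresponding multiplier of $\sigma$ from the two switch involutions and checks the single resulting scalar identity. Either way, the geometric reduction above concentrates the whole content of the theorem into one computation on the line $\ell=0$.
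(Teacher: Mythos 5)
Your reduction is sound and genuinely different from the paper's argument: the paper proves Theorem \ref{Th composition} by a direct symbolic computation (composing the explicit formulas for the two switches, in the style worked out for Theorem \ref{Th characterization} in Section \ref{sect: kahan as gen manin}, and comparing with \eqref{eq: Kahan map}), whereas you observe that both maps preserve every conic of the pencil, restrict there to M\"obius transformations, and share the two fixed points $\cE_\lambda\cap\{\ell=0\}$, so that equality need only be checked at one further datum per fibre. The individual geometric claims are all correct: $f=\ell\, J\nabla H$ vanishes on $\ell=0$, so $\Phi_f$ fixes that line pointwise with $D\Phi_f(P)=(I-f'(P))^{-1}(I+f'(P))$; at such $P$ the matrix $f'(P)=\big(J\nabla H(P)\big)(\nabla\ell)^{\top}$ is rank one with nonzero eigenvalue $\mu=\alpha\,\partial_yH-\beta\,\partial_xH$ and eigenvector $J\nabla H(P)$, which is indeed tangent to $\cE_\lambda$ because $\nabla\widetilde H=\nabla H$ on $\ell=0$; and the bookkeeping of how the two switches permute the base points is right, giving $\sigma(B_1)=B_4$ in your labelling.

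The gap is that the decisive computation is never performed, and it is precisely there that the specific value of $\rho$ in the statement enters. What you have established so far would hold verbatim for \emph{any} point $B$ on the line $\ell=0$ in place of $B_0$ (the two fixed points on $\ell=0$ and the permutation of base points do not see $\rho$ at all in the multiplier route, and only see it through the unverified midpoint identity in the three-point route). So you must actually do one of: (a) verify that $\rho(-\beta,\alpha)=\tfrac12(B_1+B_3)$ from the coordinates of the base points $C=D=0$ \emph{and} evaluate $\Phi_f(B_1)=B_4$; or (b) compute the multiplier of $\sigma|_{\cE_\lambda}$ at $P\in\{\ell=0\}$ by the chain rule through $Q$ and match it with $(1+\mu)/(1-\mu)$ as an identity in $P$. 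Both are deferred ("a short computation", "checks the single resulting scalar identity"). One further point to make explicit: your convention $B_0=\tfrac12(B_1+B_3)$ is the transpose of the paper's (Theorem \ref{Th geometry} has $B_0=\tfrac12(B_2+B_4)$, with $B_1,B_3$ among the indeterminacy points of $\Phi_f$). The relabelling is harmless, but it is essential that the base point at which you evaluate $\Phi_f$ be one of the two where $\Phi_f$ is defined (the indeterminacy points of $\Phi_f^{-1}$), not one of its own indeterminacy points --- otherwise $\Phi_f(B_1)$ is meaningless and the three-point argument collapses; the real issue is well-definedness, not the irrationality of the coordinates. With (a) or (b) carried out, your proof is complete and more informative than the paper's.
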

\begin{proof}
Direct computation.
\end{proof}

The following geometric relation between ingredients of the construction passed unnoticed in \cite{KCMMOQ}.
\begin{theorem}\label{Th geometry}
Generically, the map $\Phi_f$ considered as a birational map $\bbC P^2\dashrightarrow\bbC P^2$, has three singular points, $B_1$, $B_3$, $B_0$, lying on the lines $\ell(x,y)=-c$, $\ell(x,y)=c$, and  $\ell(x,y)=0$, respectively. Similarly, the map $\Phi_f^{-1}$ has three singular points, $B_2$, $B_4$, $B_5$, lying on the lines $\ell(x,y)=-c$,  $\ell(x,y)=c$, and  $\ell(x,y)=0$, respectively. The points $B_1,B_2,B_3,B_4$ are the base points of the pencil \eqref{eq: pencil}. The points $B_0$ and $B_5$ are related to those as follows:
\beq \label{eq B0}
B_0=\tfrac{1}{2}(B_2+B_4), \quad B_5=\tfrac{1}{2}(B_1+B_3).
\eeq
See Figure \ref{Fig1}.
\end{theorem}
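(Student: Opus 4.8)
The plan is to analyze the two involutions in the factorization \eqref{eq main} separately and then compose them. After an affine change of variables I may assume $\ell(x,y)=x$, so that the level lines of $\ell$ are the vertical lines $x=\mathrm{const}$, the point $B_\infty=[0:1:0]$ is their common point at infinity, the reducible curve $D=0$ is the pair of vertical lines $x=\pm c$, and the line $\ell=0$ is the $y$-axis. The first fact I would establish is that the $B_\infty$-switch $\sigma_\infty:=I_{\mathfrak{E},B_\infty}$ is in fact an \emph{affine} (hence base-point-free, biregular) involution of the form $(x,y)\mapsto(x,g(x)-y)$ with $g$ affine-linear. This follows from the special shape of the pencil: since $D=1-\gamma_1\ell^2$ depends on $x$ alone, on every vertical line $x=x_0$ (i.e.\ every line through $B_\infty$) the restriction of $C-\lambda D$ is a quadratic in $y$ whose leading and subleading coefficients are independent of $\lambda$; hence the sum of its two roots, and therefore the switch $y\mapsto g(x_0)-y$, does not depend on which conic of the pencil one uses. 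In particular $\sigma_\infty$ preserves each vertical line, preserves the line $\ell=0$, and interchanges the two base points on $x=-c$ and the two base points on $x=c$.

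The second step is to show that the $B_0$-switch $\sigma_0:=I_{\mathfrak{E},B_0}$, whose center $B_0$ is a \emph{finite} point of the line $\ell=0$, is a genuine \emph{quadratic} Cremona involution, and to locate its three base points. Here I would use the general principle that a switch with center $B$ is indeterminate exactly at $B$ and at those base points of the pencil that are \emph{not} collinear with $B$ along a chord of the four base points; a base point that does lie on such a chord through $B$ is regular and is simply swapped with the other endpoint of that chord, since the second intersection of the chord with any conic of the pencil is forced to be that endpoint. I would then show that $B_0$ lies on exactly one chord, joining a base point $B_2$ on $x=-c$ to a base point $B_4$ on $x=c$. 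Consequently $\sigma_0$ swaps $B_2\leftrightarrow B_4$, and its three base points are $B_0$ together with the remaining base points $B_1$ (on $x=-c$) and $B_3$ (on $x=c$). Because $\sigma_\infty$ is biregular, $\Phi_f=\sigma_\infty\circ\sigma_0$ has the same indeterminacy locus as $\sigma_0$, namely $\{B_0,B_1,B_3\}$; and since both factors are involutions, $\Phi_f^{-1}=\sigma_0\circ\sigma_\infty$ has indeterminacy locus $\sigma_\infty(\{B_0,B_1,B_3\})=\{B_5,B_2,B_4\}$, where $B_5:=\sigma_\infty(B_0)$ and I use $\sigma_\infty(B_1)=B_2$, $\sigma_\infty(B_3)=B_4$ from the first step.

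Finally, the midpoint relations \eqref{eq B0} fall out of the geometry. From $\sigma_0(B_2)=B_4$ we get $B_0\in\overline{B_2B_4}$; since $B_2$ lies on $\ell=-c$ and $B_4$ on $\ell=c$, and $\ell$ varies linearly along the chord, the unique point of $\overline{B_2B_4}$ with $\ell=0$---which is $B_0$---is exactly the midpoint, giving $B_0=\tfrac12(B_2+B_4)$. Applying the affine involution $\sigma_\infty$, which sends $B_2\mapsto B_1$, $B_4\mapsto B_3$ and preserves the line $\ell=0$, carries the chord $\overline{B_2B_4}$ to $\overline{B_1B_3}$ and $B_0$ to $B_5$, so $B_5\in\overline{B_1B_3}\cap\{\ell=0\}$ and the same interpolation argument yields $B_5=\tfrac12(B_1+B_3)$. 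The main obstacle I anticipate is the second step: proving that the center $B_0$ of Theorem~\ref{Th composition} genuinely lies on a chord of the four base points (equivalently, that $\sigma_0$ drops from the expected cubic down to a quadratic with exactly these three indeterminacy points). This collinearity is the crux, since it is what singles out the pairing $\{B_2,B_4\}$ against $\{B_1,B_3\}$; I expect to verify it by a direct computation with the explicit center $B_0=\rho(-\beta,\alpha)$ together with the base points obtained as the intersections of $C=0$ with the lines $\ell=\pm c$.
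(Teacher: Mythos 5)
Your argument is correct in outline, but it takes a genuinely different route from the paper. The paper's proof is a direct computation: after normalizing $\ell(x,y)=x$ it writes $\Phi_f$ explicitly as $\t x=R/T$, $\t y=S/T$ with quadratic $R,S,T$, solves $R=S=T=0$ to get $B_1,B_3,B_0$ in coordinates, obtains $B_2,B_4,B_5$ by the reversibility $\Phi_f^{-1}=\Phi_{-f}$ (flip the signs of all $a_k$), and reads off the midpoint relations \eqref{eq B0} and the fact that $C$ vanishes at $B_1,\dots,B_4$ from the explicit formulas. You instead derive the statement from the factorization \eqref{eq main}: the affinity of $I_{\mathfrak E,B_\infty}$ (which the paper itself proves, as the Corollary in Section 3, though note its displayed simplification ``$x_1=-x_0$'' should read $x_1=x_0$, consistent with your $(x,y)\mapsto(x,g(x)-y)$), the standard fact that a pencil switch is regular exactly at those base points lying on a chord through the center, and the observation that $\ell$ interpolates linearly along the chord $\overline{B_2B_4}$, which makes the midpoint relations conceptual rather than coincidences of coordinates. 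What each approach buys: the paper's computation is self-contained and delivers explicit coordinates for all six points; yours explains \emph{why} there are exactly three singular points on each side and why the midpoints appear, at the price of invoking Theorem \ref{Th composition} and of the one computation you correctly identify as the crux --- that $B_0=\rho(-\beta,\alpha)$ lies on a chord of the base quadrangle. You could even avoid that computation: $\Phi_f$ is manifestly quadratic (see \eqref{eq: Phi gen} or \eqref{eq: Kahan map}) and $I_{\mathfrak E,B_\infty}$ is linear, so $I_{\mathfrak E,B_0}=I_{\mathfrak E,B_\infty}\circ\Phi_f$ has degree $2$ rather than the generic degree $3$ of a switch with finite center, which forces $B_0$ to lie on exactly one chord (it cannot lie on two, since the two chords contained in $\{x=\pm c\}$ miss the line $x=0$ and a diagonal point would make the map linear). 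With that degree-count supplement your proof is complete and, in my view, more illuminating than the paper's, though you should still say a word about what ``generically'' excludes (coincident base points, $B_0$ hitting a base point, base points at infinity).
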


\begin{figure}
\begin{center}
\includegraphics[width=0.6\textwidth]{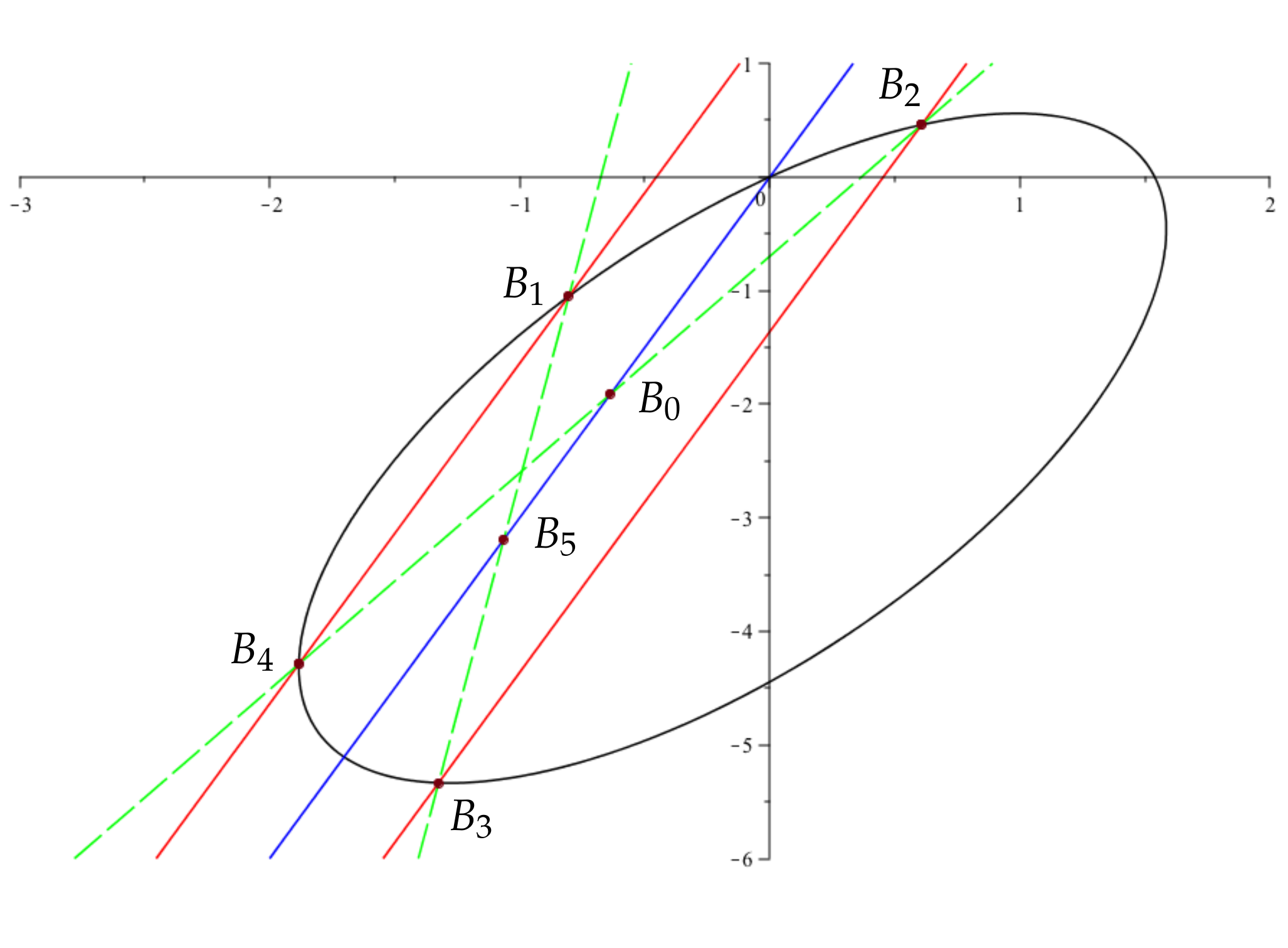}
\caption{The black curve is the conic $C=0$, the red lines represent the reducible conic $D=0$. The finite base points are $B_1, \ldots, B_4$. The points $B_0$, $B_5$ are singular points of the Kanan map $\Phi_f$, resp. of $\Phi_f^{-1}$, for the following data: $\ell(x,y)=3x-y$, $H(x,y)=-x^2+(2/5)xy+(1/2)y^2-4x+4y$.}
\label{Fig1}
\end{center}
\end{figure}

The main result we would like to add in the present paper is that Theorem \ref{Th composition} can be reversed.

\begin{theorem}\label{Th characterization}
For a linear form $\ell(x,y)=\alpha x+\beta y$, let $B_1,B_2$ be any two distinct points on the line $\ell(x,y)=-c$, let $B_3,B_4$ be any two distinct points on the line $\ell(x,y)=c$, and let $\mathfrak E$ be the pencil of conics with the base points $B_1,B_2,B_3,B_4$. Further, set 
$$
B_0= \tfrac{1}{2}(B_2+B_4).
$$ 
This point lies on the line $\ell(x,y)=0$. Let $B_\infty=[-\beta:\alpha:0]$ be the point at infinity through which this line passes. Then the map $I_{\mathfrak E,B_\infty}\circ I_{\mathfrak E,B_0}$ is the Kahan discretization $\Phi_f$ of a Hamiltonian vector field \eqref{eq: CMOQ vector field}. This map has singularities at $B_1$, $B_3$, and $B_0$.
\end{theorem}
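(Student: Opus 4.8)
The plan is to recover a Hamiltonian $H$ from the geometric data and then appeal to Theorem~\ref{Th composition}. The starting observation is that, since $B_1,B_2$ lie on $\ell=-c$ and $B_3,B_4$ on $\ell=c$, the reducible conic $\ell^2-c^2=0$ (the union of the two lines) passes through all four base points and is therefore a member of the pencil $\mathfrak E$. Normalizing it as $D=1-c^{-2}\ell^2$ identifies it with the second basis curve in \eqref{eq: CD} and fixes $\gamma_1=c^{-2}$, i.e.\ the constraint $a_2^2-a_1a_3=c^{-2}$ on the sought-for $H$.

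Next I would pin down the nonsingular basis curve $C$. Since neither $H$ nor $\ell$ has a constant term, the conic $C=H-\tfrac12\gamma_2\ell^2$ automatically passes through the origin; hence $C$ must be the unique member of $\mathfrak E$ through $O=(0,0)$, i.e.\ the conic through the five points $B_1,B_2,B_3,B_4,O$. This determines $C$ up to a scalar. I would then read off $H$: its linear part coincides with that of $C$ (giving $a_4,a_5$ up to the scale), while its quadratic part is that of $C$ plus $\tfrac12\gamma_2\ell^2$. The remaining two unknowns, namely the scale of $C$ and the number $\gamma_2$, are fixed by the two equations $\gamma_1=c^{-2}$ and $\gamma_2=a_3a_4^2+a_1a_5^2-2a_2a_4a_5$. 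A dimension count (the four position parameters of $B_1,\dots,B_4$ against the four free parameters of $H$ left after imposing $\gamma_1=c^{-2}$) shows this system is generically solvable, with finitely many solutions, so that $H$ is essentially determined by the data.

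With $H$ (hence $f$) in hand, Theorem~\ref{Th composition} gives $\Phi_f=I_{\mathfrak E,B_\infty}\circ I_{\mathfrak E,B_0^f}$, where the pencil is exactly $\mathfrak E$ by construction and $B_\infty=[-\beta:\alpha:0]$ agrees with the prescribed point at infinity. It remains to match the finite switch point $B_0^f$ with the prescribed $B_0=\tfrac12(B_2+B_4)$. For this I would invoke Theorem~\ref{Th geometry} applied to the reconstructed $f$: the finite switch point of $\Phi_f$ is the midpoint of the two base points that are singular for $\Phi_f^{-1}$. Fixing the sign of $f$ and the labeling so that these two points are precisely $B_2,B_4$ (equivalently, so that $B_1,B_3$ are the singular points of $\Phi_f$) yields $B_0^f=\tfrac12(B_2+B_4)=B_0$. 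Hence $I_{\mathfrak E,B_\infty}\circ I_{\mathfrak E,B_0}=\Phi_f$, and the asserted singularities at $B_1,B_3,B_0$ are exactly those furnished by Theorem~\ref{Th geometry}.

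The main obstacle is this last identification. The reconstruction uses only the unordered quadruple $B_1,\dots,B_4$, whereas $B_0=\tfrac12(B_2+B_4)$ is but one of the four ``cross-midpoints'' of the base points lying on $\ell=0$; one must show that it is the one realized as the intrinsic finite switch point of a reconstructed map, and not one of the others. This is precisely where the midpoint relation \eqref{eq B0} of Theorem~\ref{Th geometry} is indispensable, and where the degenerate configurations excluded by the word ``generically'' enter (for instance a singular $C$, or coincidences forcing $\gamma_1=0$). In practice I would carry out the verification in coordinates adapted to $\ell$, writing $u=\ell(x,y)$ so that the two lines become $u=\pm c$, the origin sits on $u=0$, and $B_\infty$ is the vertical point at infinity; in these coordinates the two switches and the point $\rho(-\beta,\alpha)$ become explicit low-degree expressions, reducing the final check to the same kind of direct computation already used for Theorems~\ref{Th composition} and~\ref{Th geometry}.
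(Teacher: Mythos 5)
Your strategy is genuinely different from the paper's, and its first half contains a correct and rather pretty observation: the reducible conic $\ell^2-c^2=0$ belongs to $\mathfrak E$ and forces $\gamma_1=c^{-2}$, and since neither $H$ nor $\ell$ has a constant term, $C=H-\tfrac12\gamma_2\ell^2$ is the unique member of the pencil through the origin, i.e.\ the conic through $B_1,\dots,B_4,O$. But the argument has a genuine gap exactly where you yourself locate it, and you do not close it. First, existence: writing $C=tC_0$ for a fixed representative $C_0$, the unknowns $(t,\gamma_2)$ satisfy a \emph{nonlinear} system (quadratic and cubic), and ``a dimension count shows this is generically solvable with finitely many solutions'' is not a proof that a solution exists, that it yields a nonsingular $C$, or that the resulting $f$ has $\mathfrak E$ as its invariant pencil. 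Second, and more seriously, the matching of $B_0$: your reconstruction of $C$ uses only the \emph{unordered} quadruple of base points, so all four cross-midpoints $\tfrac12(B_i+B_j)$, $i\in\{1,2\}$, $j\in\{3,4\}$, lead to the same $C$; which midpoint is realized as the intrinsic switch point of Theorem~\ref{Th composition} is encoded in which root $(t,\gamma_2)$ you pick, and Theorems~\ref{Th composition} and \ref{Th geometry} only tell you it is \emph{some} cross-midpoint. ``Fixing the labeling so that these two points are precisely $B_2,B_4$'' is not available: the labeling and $B_0=\tfrac12(B_2+B_4)$ are part of the hypothesis. What could close this (and is absent from your proposal) is an equivariance argument: the group $(\mathbb Z/2)^2$ swapping $B_1\leftrightarrow B_2$ and $B_3\leftrightarrow B_4$ acts transitively on the four midpoints, preserves the solution set of your reconstruction, and intertwines the solution-to-midpoint assignment, so that if one midpoint is realized all four are. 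Your actual fallback --- ``the same kind of direct computation'' --- is not your proof but the paper's.

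For comparison, the paper proceeds in the opposite direction and avoids both issues at once: it normalizes $\ell(x,y)=x$, computes the composition $I_{\mathfrak E,B_\infty}\circ I_{\mathfrak E,B_0}$ explicitly via Lemmas~\ref{lemma involution inf} and \ref{lemma involution finite}, and then demands that the resulting rational map satisfy the Kahan equations of motion \eqref{dk1}--\eqref{dk2}. Since those equations are \emph{linear} in $a_1,\dots,a_5$, this yields a linear system with a unique explicit solution, which simultaneously proves existence of $H$ and pins down the correct midpoint with no disambiguation needed. That computation is precisely the step your proposal still owes.
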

We remark that one can take $B_0$ to be any of the four points
$$
\tfrac{1}{2}(B_1+B_3), \quad \tfrac{1}{2}(B_1+B_4), \quad \tfrac{1}{2}(B_2+B_3), \quad \tfrac{1}{2}(B_2+B_4),
$$ 
so that one obtains in this manner four different Kahan maps (or, better, two different Kahan maps along with their respective inverse maps).

\section{Proof of Theorem \ref{Th geometry}}
\label{sect: kahan geometry}

\begin{proof}
Due to the covariance of the Kahan discretization with respect to rotations, we can restrict ourselves to the case $\ell(x,y)=x$, so that $\alpha=1$ and $\beta=0$. 
The Hamiltonian vector field $f=\ell J\nabla H$ for the Hamilton function
\beq \label{H}
H(x,y)= \tfrac{1}{2}a_1 x^2 +a_2 xy +\tfrac{1}{2}a_3 y^2 +a_4 x+a_5 y
\eeq
is given in this case by
\bea
\dot x & = & \ell \partial H/\partial y\;=\; a_2x^2+a_3xy+a_5x,  \label{k1} \\
\dot y & = & -\ell \partial H/\partial x\;=\; -a_1x^2-a_2xy-a_4x.    \label{k2}
\eea
The Kahan discretization of this system is the map $(\t x , \t y)= \Phi_f (x,y)$ defined by the equations of motion
\bea
\t x-x  & = & 2a_2\t x x+a_3(\t xy+x \t y)+a_5(\t x+x),  \label{dk1}\\
\t y-y & = & - 2a_1\t x x-a_2(\t xy+x \t y)-a_4(\t x+x),        \label{dk2}
\eea
which can be solved for $\t x,\t y$ according to
\beq
\begin{pmatrix} \t x \\ \t y\end{pmatrix}=
A^{-1}(x,y)\begin{pmatrix}  x+ a_5 x \\  y-a_4 x
\end{pmatrix},
\eeq
where
$$
A(x,y)=\begin{pmatrix} 1-2 a_2 x-a_3 y- a_5 &  -a_3 x\\
2 a_1 x+a_2 y+ a_4  & 1+ a_2 x \end{pmatrix}.
$$
As a result, 
\beq \label{eq: Kahan map}
\t x=\frac{R(x,y)}{T(x,y)}, \quad \t y=\frac{S(x,y)}{T(x,y)},
\eeq
where $R$, $S$ and $T$ are polynomials of degree 2. They are given by
\bea
R(x,y) & = & x\big(1+a_5+(a_2+a_2a_5-a_3a_4)x+a_3y\big), \nn \\
S(x,y) & = & -2a_4x+(1-a_5)y-2(a_1+a_1a_5-a_2a_4)x^2-(3a_2+a_2a_5-a_3a_4)xy-a_3y^2, \nn \\
T(x,y) & = & 1-a_5-(a_2+a_2a_5-a_3a_4)x-a_3y-2(a_2^2-a_1a_3)x^2. \nn
\eea

Now one shows by a direct computation that the system of three equations $R(x,y)=0$, $S(x,y)=0$, $T(x,y)=0$ the singular points of $\Phi_f$ admits exactly three solutions:
$$
B_1=\Big(-c ,-\frac{1+a_5-(a_2+a_2a_5-a_3a_4)c}{a_3} \Big), \quad B_3=\Big(c ,-\frac{1+a_5+(a_2+a_2a_5-a_3a_4)c}{a_3} \Big),
$$
and 
$$
B_0=\Big(0,\frac{1-a_5}{a_3} \Big),
$$
where $c=(a_2^2-a_1a_3)^{-1/2}$. Changing the signs of all $a_k$, we find the three singular points of $\Phi_f^{-1}$:
$$
B_2=\Big(-c ,\frac{1-a_5+(a_2-a_2a_5+a_3a_4)c}{a_3} \Big), \quad B_4=\Big(c ,\frac{1-a_5-(a_2-a_2a_5+a_3a_4)c}{a_3} \Big),
$$
and 
$$
B_5=\Big(0,-\frac{1+a_5}{a_3} \Big).
$$
Equations \eqref{eq B0} follow immediately. One also confirms by a direct computation that the polynomial $C(x,y)$ vanishes at the points $B_1,B_2,B_3,B_4$.
\end{proof}

\section{Proof of Theorem \ref{Th characterization}}
\label{sect: kahan as gen manin}

We start with formulas for computing involutions on conics.

\begin{lemma} \label{lemma involution inf}
Let $\cE$ be a nonsingular conic in $\bbC^2$ given by the equation
\beq\label{eq: gen conic}
\cE: \quad u_1x^2+u_2xy+u_3y^2+u_4x+u_5y+u_6=0.
\eeq
Let  $B_\infty=[-\beta:\alpha: 0]$ be a point at infinity. Then the map $I_{\cE,B_\infty}: P_0=(x_0,y_0)\mapsto P_1=(x_1,y_1)$ is given by
\bea
x_1 & = & \frac{(-u_1\beta^2+u_3\alpha^2)x_0+(-u_2\beta^2+2u_3\alpha\beta)y_0-u_4\beta^2+u_5\alpha\beta}
                       {u_1\beta^2-u_2\alpha\beta+u_3\alpha^2}, \label{inv inf x1} \\
y_1 & = & \frac{(2u_1\alpha\beta-u_2\alpha^2)x_0+(u_1\beta^2-u_3\alpha^2)y_0+u_4\alpha\beta-u_5\alpha^2}
                       {u_1\beta^2-u_2\alpha\beta+u_3\alpha^2} . \label{inv inf y1} 
\eea
\end{lemma}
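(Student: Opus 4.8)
The plan is to read off the $B_\infty$-switch directly from its geometric definition. Since $B_\infty=[-\beta:\alpha:0]$ is the point at infinity in the direction $(-\beta,\alpha)$, the line $(B_\infty P_0)$ through $P_0=(x_0,y_0)$ is the affine line with that direction, which I would parametrize as
\beq
(x,y)=(x_0-\beta t,\; y_0+\alpha t),
\eeq
normalized so that $t=0$ corresponds to $P_0$. By definition, the image $P_1=I_{\cE,B_\infty}(P_0)$ is the second intersection of this line with $\cE$.

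First I would substitute this parametrization into the conic equation \eqref{eq: gen conic}, obtaining a quadratic $a t^2+b t+c=0$ in $t$. The leading coefficient is the degree-$2$ part $u_1X^2+u_2XY+u_3Y^2$ evaluated in the direction $(X,Y)=(-\beta,\alpha)$, namely
\beq
a=u_1\beta^2-u_2\alpha\beta+u_3\alpha^2,
\eeq
which is exactly the common denominator appearing in \eqref{inv inf x1}--\eqref{inv inf y1}; it is nonzero precisely because the standing hypothesis $B_\infty\notin\cE$ is equivalent to the nonvanishing of this form. Because $P_0\in\cE$, the constant term $c$ vanishes, so the quadratic factors as $t(at+b)=0$ and the second root is simply $t_1=-b/a$ by Vieta, with no square root involved --- this is what makes the switch rational. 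Collecting the terms linear in $t$ gives
\beq
b=-2u_1\beta x_0+u_2(\alpha x_0-\beta y_0)+2u_3\alpha y_0-u_4\beta+u_5\alpha.
\eeq

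Finally I would back-substitute $t_1=-b/a$ into $x_1=x_0-\beta t_1$ and $y_1=y_0+\alpha t_1$, place each expression over the common denominator $a$, and collect the coefficients of $x_0$, of $y_0$, and the constant term; this reproduces \eqref{inv inf x1} and \eqref{inv inf y1} verbatim. The only obstacle is the clerical one of organizing these collections without sign errors. Conceptually the argument is entirely routine once one observes that the membership $P_0\in\cE$ removes the constant term of the quadratic, so that Vieta's formula alone delivers the second intersection point as a rational function of $(x_0,y_0)$.
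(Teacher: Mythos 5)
Your proof is correct and follows essentially the same route as the paper's: restrict the conic to the line joining $P_0$ and $B_\infty$, observe that the resulting quadratic in the line parameter has the known root corresponding to $P_0$ (in your normalization, no constant term), extract the second intersection rationally by Vieta, and back-substitute; your coefficients $a$ and $b$ and the resulting collection of terms do reproduce \eqref{inv inf x1}--\eqref{inv inf y1}. The one genuine difference is how the line is written. The paper uses the slope-intercept form $y=\mu x+\nu$ with $\mu=-\alpha/\beta$, $\nu=(\alpha x_0+\beta y_0)/\beta$ and applies Vieta as $x_0+x_1=-A_1/A_2$; this divides by $\beta$ and so formally excludes the vertical direction $\beta=0$ --- which is precisely the case ($\alpha=1$, $\beta=0$) in which the lemma is invoked in the proof of Theorem \ref{Th characterization}, so there one must either redo the computation or argue that the final polynomial identities extend. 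Your affine parametrization $(x,y)=(x_0-\beta t,\,y_0+\alpha t)$ treats all directions $(\alpha,\beta)\neq(0,0)$ uniformly and proves the stated formulas without that caveat. A further small plus: you identify the common denominator $u_1\beta^2-u_2\alpha\beta+u_3\alpha^2$ as the quadratic part of the conic evaluated at $(-\beta,\alpha)$, i.e., as the condition that $B_\infty$ not lie on the projective closure of $\cE$, which makes the nondegeneracy hypothesis of the $B_\infty$-switch transparent; the paper leaves this implicit.
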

\begin{proof} Equation of the line $(B_\infty P_0)$ is $y=\mu x + \nu$, where $\mu=-\alpha/\beta$ and $\nu=(\alpha x_0+\beta y_0)/\beta$. To find the second intersection point $P_1=(x_1,y_1)$ of this line with $\cE$, we substitute this equation into equation $U(x,y)=0$ of the conic. We have:
\beq \label{involution quadratic eq}
U(x,\mu x+\nu)=A_2x^2+A_1x+A_0,
\eeq
where
\bea
A_2 & = & u_3\mu^2+u_2\mu+u_1, \label{A2} \\
A_1 & = & (2u_3\mu+u_2)\nu+ u_5\mu+u_4, \label{A1}\\
A_0 & = & u_3\nu^3+u_5\nu+u_6 \label{A0}.
\eea
Thus, for $x_1$ we get a quadratic equation. By the Vieta formula, we have: $x_0+x_1=-A_1/A_2$, which upon a straightforward computation gives \eqref{inv inf x1}. After that, $y_1=\mu x_1+\nu$ gives \eqref{inv inf y1}.
\end{proof}

\begin{corollary}
For a pencil $\mathfrak E$ consisting of the conics
$$
\cE_\lambda: \quad C(x,y)-\lambda\big(c^2-\ell^2(x,y)\big)=0
$$
with $\ell(x,y)=\alpha x+\beta y$, the involution $I_{\mathfrak E, B_\infty}$ with $B_\infty=[-\beta:\alpha:0]$ is an affine map of $\bbC^2$.
\end{corollary}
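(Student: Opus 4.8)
The plan is to reduce the computation of $I_{\mathfrak E,B_\infty}$ on the whole pencil to the single-conic formulas of Lemma \ref{lemma involution inf}, and then to observe that the resulting affine coefficients do not depend on which member $\cE_\lambda$ of the pencil one sits on. First I would record the coefficients of a generic pencil member. Writing the quadratic part of $C$ as $p_1x^2+p_2xy+p_3y^2$, its linear part as $p_4x+p_5y$, and its constant term as $p_6$, the conic $\cE_\lambda$ has, in the notation \eqref{eq: gen conic} of the Lemma,
\[
(u_1,u_2,u_3)=(p_1,p_2,p_3)+\lambda(\alpha^2,2\alpha\beta,\beta^2), \quad u_4=p_4, \quad u_5=p_5, \quad u_6=p_6-\lambda c^2,
\]
because $-\lambda\big(c^2-\ell^2(x,y)\big)$ contributes $\lambda(\alpha x+\beta y)^2$ to the quadratic part and $-\lambda c^2$ to the constant term, while leaving the linear part untouched.

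The key step is then to feed these coefficients into \eqref{inv inf x1}--\eqref{inv inf y1}. The crucial point is that the quadratic coefficients $(u_1,u_2,u_3)$ enter those formulas only through the bilinear combinations
\[
-u_1\beta^2+u_3\alpha^2, \quad -u_2\beta^2+2u_3\alpha\beta, \quad 2u_1\alpha\beta-u_2\alpha^2, \quad u_1\beta^2-u_3\alpha^2,
\]
together with the common denominator $u_1\beta^2-u_2\alpha\beta+u_3\alpha^2$. Each of these five expressions annihilates the vector $(\alpha^2,2\alpha\beta,\beta^2)$: substituting $(u_1,u_2,u_3)=(\alpha^2,2\alpha\beta,\beta^2)$ returns $0$ in every case (the last one being $\alpha^2\beta^2-2\alpha^2\beta^2+\alpha^2\beta^2=0$). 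Hence, replacing $(u_1,u_2,u_3)$ by $(u_1,u_2,u_3)+\lambda(\alpha^2,2\alpha\beta,\beta^2)$ changes none of these five quantities, and since $u_4,u_5$ are already $\lambda$-independent, every coefficient appearing in \eqref{inv inf x1}--\eqref{inv inf y1} is independent of $\lambda$.

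Consequently the affine map furnished by Lemma \ref{lemma involution inf} is literally the same for every conic $\cE_\lambda$ of the pencil. Since $I_{\mathfrak E,B_\infty}$ acts on a point $P$ by applying $I_{\cE_\lambda,B_\infty}$ for the unique $\lambda$ with $P\in\cE_\lambda$, and this single-conic map is one fixed affine transformation regardless of $\lambda$, the map $I_{\mathfrak E,B_\infty}$ coincides with that same affine transformation of $\bbC^2$. The only point that needs genuine checking is the algebraic cancellation of the $\lambda$-terms in the five combinations above; this is the heart of the matter, and it reflects the geometric fact that $B_\infty=[-\beta:\alpha:0]$ is the point at infinity of the level lines of $\ell$, so that the $\lambda$-dependent part $c^2-\ell^2$ of the pencil is constant along every line $(B_\infty P)$ and therefore cannot influence the rule sending $P$ to the second intersection point.
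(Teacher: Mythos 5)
Your proof is correct and takes essentially the same route as the paper: both arguments reduce to the observation that the substitution $(u_1,u_2,u_3)\mapsto(u_1,u_2,u_3)+\lambda(\alpha^2,2\alpha\beta,\beta^2)$ leaves formulas \eqref{inv inf x1}--\eqref{inv inf y1} invariant and that $u_6$ does not enter them. Your explicit listing of the vanishing bilinear combinations and the closing geometric remark are welcome elaborations, but the substance is identical.
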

\begin{proof} It is sufficient to observe that the formulas for the map $I_{\cE_\lambda,B_\infty}$ do not depend on $\lambda$. Indeed, the shift $u_1\mapsto u_1+\lambda \alpha^2$, $u_2\mapsto u_2+2\lambda\alpha\beta$, $u_3\mapsto u_3+\lambda\beta^2$ leaves formulas \eqref{inv inf x1}, \eqref{inv inf y1} invariant. Moreover, these formulas do not involve $u_6$ at all.
\end{proof}

\begin{lemma} \label{lemma involution finite}
Let $\cE$ be a nonsingular conic in $\bbC^2$ given by the equation
$$
\cE: \quad u_1x^2+u_2xy+u_3y^2+u_4x+u_5y+u_6=0.
$$
Let $B_0=(x_0,y_0)\not\in\cE$. Then the map $I_{\cE,B}: P_1=(x_1,y_1)\mapsto P_2=(x_2,y_2)$ is given by
\bea
x_2 & = &  -x_1-\frac{ (2u_3 \mu +u_2)\nu +u_5 \mu +u_4}{u_3 \mu^2 +u_2 \mu +u_1}, \label{x2} \\
y_2 & = &  -y_1-\frac{-(u_2\mu+2u_1) \nu+u_5 \mu^2+u_4\mu}{u_3 \mu^2 +u_2 \mu +u_1}. \label{y2}
\eea
where
\beq
\mu = \frac{y_1 -y_0}{x_1 -x_0}, \quad \nu= y_0 - \mu x_0=y_1-\mu x_1\nn.
\eeq
\end{lemma}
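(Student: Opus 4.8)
The plan is to mirror the proof of Lemma~\ref{lemma involution inf}, replacing the point at infinity by the finite center $B_0$, and to obtain the two coordinates of the image by applying Vieta's formula to two different parametrizations of the secant line. First I would write down the line $(B_0P_1)$. Since both $B_0=(x_0,y_0)$ and $P_1=(x_1,y_1)$ lie on it, its equation is $y=\mu x+\nu$ with $\mu=(y_1-y_0)/(x_1-x_0)$ and $\nu=y_0-\mu x_0=y_1-\mu x_1$, the last equality being exactly the consistency condition that both defining points sit on the line. Substituting $y=\mu x+\nu$ into the conic equation $U(x,y)=0$ produces precisely the quadratic $U(x,\mu x+\nu)=A_2x^2+A_1x+A_0$ of \eqref{involution quadratic eq}, with $A_2$ and $A_1$ given by \eqref{A2} and \eqref{A1}; these coefficients depend only on $\mu$, $\nu$ and the conic, so no new computation is needed. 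The point distinguishing the finite-center case is that \emph{both} intersection points of the secant with $\cE$ are the finite points $P_1$ and $P_2$, so their $x$-coordinates are the two roots of this quadratic. The Vieta relation $x_1+x_2=-A_1/A_2$ then yields \eqref{x2} immediately.

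For the $y$-coordinate I would exploit the symmetry of the configuration rather than substitute $y_2=\mu x_2+\nu$, which would force me to simplify a messier expression back into the stated symmetric form. Solving the line equation for $x=(y-\nu)/\mu$ and substituting into $U(x,y)=0$ gives a quadratic in $y$ whose leading coefficient is again $A_2=u_3\mu^2+u_2\mu+u_1$ and whose linear coefficient is $\widetilde A_1=-(u_2\mu+2u_1)\nu+u_5\mu^2+u_4\mu$. Since $y_1$ and $y_2$ are its two roots, the Vieta relation $y_1+y_2=-\widetilde A_1/A_2$ delivers \eqref{y2}.

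The computation is entirely routine and there is no genuine obstacle beyond bookkeeping; the only points needing care are the consistency of the two expressions for $\nu$ and the degenerate configurations (such as $x_1=x_0$ or $\mu=0$) where one of the two parametrizations breaks down. Since \eqref{x2} and \eqref{y2} are rational identities in the coordinates, they hold on the Zariski-dense locus where both parametrizations are valid and extend to the remaining cases by continuity, so these degeneracies cause no difficulty.
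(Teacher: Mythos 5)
Your proposal is correct, and its first half coincides with the paper's proof: write the secant line as $y=\mu x+\nu$, substitute into the conic to get the quadratic \eqref{involution quadratic eq} with coefficients \eqref{A2}--\eqref{A1}, and read off \eqref{x2} from the Vieta relation $x_1+x_2=-A_1/A_2$. Where you diverge is the $y$-coordinate. The paper substitutes back into the line equation, writing $y_2=\mu x_2+\nu=-y_1+2\nu-\mu A_1/A_2$, and then simplifies this expression ``after a short computation'' into the form \eqref{y2}. You instead eliminate $x$ via $x=(y-\nu)/\mu$, obtain a second quadratic in $y$ whose leading and linear coefficients (after clearing $\mu^2$) are $u_3\mu^2+u_2\mu+u_1$ and $-(u_2\mu+2u_1)\nu+u_5\mu^2+u_4\mu$ respectively --- I checked these and they are right --- and apply Vieta a second time, so that the symmetric form of \eqref{y2} drops out with no further manipulation. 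The trade-off is clear: your route is more symmetric and avoids the final simplification step, but it introduces the extra degenerate case $\mu=0$ (a horizontal secant), where the parametrization $x=(y-\nu)/\mu$ breaks down; your closing remark that \eqref{x2}--\eqref{y2} are rational identities valid on a Zariski-dense set and extend by continuity disposes of this correctly, whereas the paper's substitution $y_2=\mu x_2+\nu$ never divides by $\mu$ and so needs no such argument. One small inaccuracy in your narrative: the feature you flag as ``distinguishing the finite-center case'' (that both roots of the quadratic are the source and image points) is equally true in Lemma \ref{lemma involution inf}, where the roots are $x_0$ and $x_1$; the actual distinction is only that here the pivot $B_0$ is not itself one of the intersection points. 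This does not affect the validity of your argument.
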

\begin{proof}
Proceeding as before, we observe that in this case equation of the line $(B_0P_1)$ is $y=\mu x+\nu$. Thus, for the second intersection point $P_2=(x_2,y_2)$ of this line with $\cE$ we get a quadratic equation (\ref{involution quadratic eq}), with the coefficients $A_i$ given by \eqref{A2}--\eqref{A0}.  We compute $x_2$ from the Vieta formula $x_1+x_2=-A_1/A_2$, which leads to \eqref{x2}. To compute $y_2$, we observe that
$$
y_2=\mu x_2+\nu=-\mu\Big(x_1+\frac{A_1}{A_2}\Big)+\nu=-y_1+2\nu-\mu\frac{A_1}{A_2},
$$
which leads to \eqref{y2} after a short computation.
\end{proof}

{\em Proof of Theorem \ref{Th characterization}.} We set
$$
B_1=(-c,\xi_1), \quad B_2=(-c,\xi_2), \quad B_3=(c,\xi_3), \quad B_4=(c,\xi_4).
$$
Consider the pencil of conics with the base points $B_1,B_2,B_3,B_4$. It will have the form \eqref{eq: pencil} with a certain quadratic polynomial $C(x,y)$ and with $D(x,y)=c^2-x^2$.  For a given $(x_0,y_0)$, we determine the curve of the pencil through this point by setting 
$$
\lambda= \frac{C(x_0,y_0)}{c^2-x_0^2}.
$$ 
The equation of this curve $\cE$ is as in \eqref{eq: gen conic}. We set $B_0=\big(0,(\xi_2+\xi_4)/2\big)$, and use formulas of Lemmas \ref{lemma involution inf}, \ref{lemma involution finite} to compute $I_{\cE,B_\infty}\circ I_{\cE,B_0}$ . Notice that in the present case, $\alpha=1$, $\beta=0$, the formulas of Lemma \ref{lemma involution inf} simplify considerably and read
\beq
x_1=-x_0, \quad 
y_1  =-y_0-\frac{u_2x_0+u_5} {u_3}.
\eeq
The result of this computation is that $I_{\cE,B_\infty}\circ I_{\cE,B_0}$ is given by formulas \eqref{eq: Kahan map} with certain quadratic polynomials $R(x,y)$, $S(x,y)$, $T(x,y)$. It remains to check whether this map satisfies equations of motion \eqref{dk1}--\eqref{dk2}, which reduces to a system of linear equations for the coefficients $a_k$. A direct computation shows that this system admits a unique solution:
\bea
a_1 & = & \frac{\xi_1\xi_3+\xi_2\xi_4-\xi_1\xi_2-\xi_3\xi_4}{\xi_1+\xi_3-\xi_2-\xi_4} \ c^{-2},\\
a_2 & = & \frac{\xi_3+\xi_4-\xi_1-\xi_2}{\xi_1+\xi_3-\xi_2-\xi_4} \ c^{-1},\\
a_3 & = & \frac{-4}{\xi_1+\xi_3-\xi_2-\xi_4} ,\\
a_4 & = & \frac{\xi_1\xi_2-\xi_3\xi_4}{\xi_1+\xi_3-\xi_2-\xi_4} \ c^{-1},\\
a_5 & = & \frac{\xi_3+\xi_4+\xi_1+\xi_2}{\xi_1+\xi_3-\xi_2-\xi_4} .
\eea
This proves the theorem. \endpf

\section{Conclusions}

In \cite {PSS}, we proved an amazing characterization of integrable maps arising as Kahan discretizations of quadratic planar Hamiltonian vector fields with a constant Poisson tensor, in terms of the geometry of their set of invariant curves. Here, these results are extended to quadratic planar Hamiltonian vector fields with a linear Poisson tensor. Such a neat geometric characterization of Kahan discretizations is quite unexpected and surprizing and supports our belief expressed in \cite{PPS1, PPS2} that Kahan-Hirota-Kimura discretizations will serve as a rich source of novel results concerning algebraic geometry of integrable birational maps. It will be desirable to find similar characterizations for further classes of integrable Kahan discretizations, in dimensions $n>2$.

\section*{Acknowledgment}
This research is supported by the DFG Collaborative Research Center TRR 109 ``Discretization in Geometry and Dynamics''.


\end{document}